\theoremstyle{plain}
\newtheorem{thm}{\protect\theoremname}
\newtheorem{lem}[thm]{\protect\lemmaname}
\theoremstyle{definition}
\newtheorem{defn}[thm]{\protect\definitionname}
\def\01{\{0,1\}}
\newcommand{\eps}{\varepsilon}
\newcommand{\ket}[1]{|#1\rangle}
\newcommand{\bra}[1]{\langle#1|}
\newcommand{\ketbra}[2]{|#1\rangle\langle#2|}
\newcommand{\inpc}[2]{\langle{#1},{#2}\rangle} % inproduct, < , >
\newcommand{\set}[1]{\{#1\}}
\newcommand{\abs}[1]{|#1|}
\newcommand{\Tr}{\mbox{\rm Tr}}
\newcommand{\C}{\mathbb{C}}
\newcommand{\N}{{\cal N}}
  \providecommand{\definitionname}{Definition}
  \providecommand{\lemmaname}{Lemma}
  \providecommand{\propositionname}{Proposition}
  \providecommand{\theoremname}{Theorem}
  \providecommand{\corollaryname}{Corollary}
\begin{document}

\title{A Generalization of Kochen-Specker Sets Relates Quantum Coloring to Entanglement-Assisted Channel Capacity}

\author{
Laura Man\v{c}inska
\thanks{IQC, University of Waterloo, 200 University Ave West, Waterloo ON, N2L 3G1, Canada. Email: lmancins@uwaterloo.ca }
\and
Giannicola Scarpa
\thanks{CWI, Science Park 123, 1098 XG Amsterdam, the Netherlands. Supported by a Vidi grant from the Netherlands Organization for Scientific Research (NWO). \mbox{Email: g.scarpa@cwi.nl}}
\and
Simone Severini
\thanks{Department of Computer Science, and Department of Physics \& Astronomy, University College London, WC1E 6BT London, United Kingdom. Supported by the Royal Society. Email: simoseve@gmail.com}
}

\maketitle

\begin{abstract}
We introduce two generalizations of Kochen-Specker (KS) sets: projective KS sets and generalized KS sets. We then use projective KS sets to characterize all graphs for which the chromatic number is strictly larger than the quantum chromatic number. Here, the quantum chromatic number is defined via a nonlocal game based on graph coloring. We further show that from any graph with separation between these two quantities, one can construct a classical channel for which entanglement assistance increases the one-shot zero-error capacity. As an example, we exhibit a new family of classical channels with an exponential increase.
\end{abstract}

\section{Introduction}

We establish a connection between three topics that involve the
foundations of quantum mechanics and quantum Shannon theory. These are
generalized Kochen-Specker sets, entanglement-assisted zero-error
capacities, and graph coloring games.

The (Bell-)Kochen-Specker theorem \cite{Bell-KS, KS} is one of the most striking results in quantum mechanics.
Informally, it states that it is not possible for all quantum mechanical observables to have definite values that are also noncontextual (i.e., independent of the measurement arrangement).
One way to prove the theorem is by exhibiting a set of
vectors for which there is no function that singles out exactly one vector for each orthonormal basis in the set.
Such collections are called \emph{Kochen-Specker} (KS) \emph{sets}
(see, \emph{e.g.}, \cite{peres,cabKS18}).

Recently a relaxation of KS sets has been used to construct classical channels with quantum advantage \cite{CLMW09}. Consider two parties that wish to use a noisy classical channel to transfer information with zero probability of error. For some channels quantum players using shared entanglement can outperform their classical counterparts. More precisely, it has been shown that entanglement can increase both the one-shot \cite{CLMW09} and asymptotic \cite{LMMOR, BBG11} zero-error capacity. The entanglement-assisted zero-error capacity is the quantum analogue of Shannon zero-error capacity, a well-studied concept in classical information theory. The entanglement-assisted zero-error capacity is known to be upper bounded by the Lov\'{a}sz theta number \cite{ZEC-theta, Winter}, which can be computed efficiently.

A relaxation of KS sets has also been shown to characterize a certain class of \emph{pseudo-telepathy games }\cite{PT-survey, KS-PT}. These are nonlocal games that can be won with certainty by quantum players using shared entanglement, while all classical players  fail with nonzero probability.
A specific class of pseudo-telepathy games is based on graph coloring.
Here, the advantage of entangled players is quantified by the so-called \emph{quantum chromatic number}. The quantum chromatic number can also be interpreted combinatorially via a vertex labeling problem using tuples of projectors satisfying certain completeness and orthogonality constraints \cite{qchrom}. Remarkably, the computational complexity of the quantum chromatic number is unknown.

The present work relates the above notions.
It gives a description of different phenomena with the same mathematical toolbox,
with
potential applications in zero-error information theory such as
nonclassical communication protocols \cite{superduper}, quantum analogues of information-hiding methods
(\emph{e.g.}, watermarking \cite{watermark}), and novel proof techniques for combinatorial
problems either in optimization or in extremal set theory \cite{Alon}.

\subsection{Our contribution}

We introduce two generalizations of Kochen-Specker sets:
\emph{projective KS sets}, consisting of projectors, and \emph{generalized KS sets}, consisting of positive-semidefinite operators.
We show that projective KS sets are useful in a number of ways:
\begin{enumerate}
\item
In Section \ref{sec:zec2}, we show how to use projective KS sets to construct classical channels for which entaglement assistance increases their one-shot zero-error capacity. This generalizes a similar construction for KS sets given in \cite{CLMW09}.
\item In Section \ref{sec:qcn2}, we show that projective KS sets completely characterize the graphs exhibiting a separation between chromatic number and quantum chromatic number. The characterization settles the grap-theoretic discussion started in \cite{qchrom}; a characterization for the rank-1 case was already proposed in \cite{SS12}. Interestingly, \cite{qchrom-aqis} observed a separation between rank-1 and general rank quantum chromatic number. Hence, the use of projective KS sets is indeed necessary for a full characterization.
\item In Section \ref{sec:QCN_ZEC}, we show how to use graphs for which the chromatic and quantum chromatic number are different to construct classical channels with a separation between entanglement-assisted and classical one-shot zero-error capacity.
This answers an open question formulated in \cite{SS12}. Graphs with such separations are useful for quantifying the information-theoretic gain permitted by the use of shared entanglement.
\item In Appendix \ref{apx:pt}, we show that there is a correspondence between projective KS sets and pseudo-telepathy games that can be won by quantum players using only projective measurements on maximally entangled state. This generalizes a similar result about KS sets given in \cite{KS-PT}.
\end{enumerate}

It is still an open problem to find direct applications of generalized KS sets in the context of quantum Shannon theory.
Through the paper assume familiarity with the basics of quantum
information theory. The reader can find a good introduction in \cite[Chapter 2]{NielsenChuang}.

\section{Kochen-Specker sets}

Let $S \subset \C^n $. A function \mbox{$f: S \rightarrow \{0,1\}$} is a \emph{marking function} for $S$ if
for all orthonormal bases $b \subset S$ we have $\sum_{u\in b} f(u) = 1$.
Gleason's theorem \cite{Gleason} implies that for any $n\geq 3$ there does not exist a marking function for $\C^n$. Bell \cite{Bell-KS}, and independently Kochen and Specker \cite{KS}, interpreted this statement in the framework of contextuality of physical theories. For this reason, this statement is also known as the (Bell-)Kochen-Specker theorem. Since then, finite sets of vectors in some given dimension giving rise to a proof of this theorem are known as Kochen-Specker (KS) sets.

\begin{defn} [KS set]
A set of unit vectors $S\subset \C^n$ is a \emph{Kochen-Specker set} if there is no marking function for $S$.
\end{defn}

Renner and Wolf \cite{KS-PT} considered a generalization of KS sets called \emph{weak} KS sets. Intuitively, for a weak KS set there can be marking functions, but every such function evaluates to $1$ for two orthogonal vectors in the set.
\begin{defn} [weak KS set]
A set of unit vectors $S\subset \C^n$ is a \emph{weak Kochen-Specker set} if for all marking functions $f$ for $S$ there exist orthogonal vectors $u,v\in S$ such that $f(u)=f(v)=1$.
\end{defn}
As explained in \cite{KS-PT}, every KS set is clearly a weak KS set but the converse does not always hold. However, every weak KS set can be completed to a KS set by adding $O(|S|^2)$ elements. Hence, a weak KS set also gives a proof of the Kochen-Specker theorem in some specific dimension. In fact it is more convenient to deal with weak KS sets, since they capture the essence of KS sets and can contain fewer vectors.

\subsection{Generalizations of KS sets}
We now define a natural generalization of weak KS sets by considering
subsets of $\mathcal{Q}_n$, the set of all $n\times n$ orthogonal projectors, instead of
subsets of $\mathbb{C}^{n}$.
Recall that an orthogonal projector is a Hermitian matrix $P$ such that $P^2 = P$.
For brevity,
from now on we omit the word ``orthogonal'' when we talk about projectors. Moreover,
in naming this generalization of KS sets we omit the word ``weak'' and use the term ``projective KS sets''.

Let $\mathcal{M}_n$ be the set of $n\times n$ matrices.
A \emph{marking function
$f$ for $S\subset\mathcal{M}_n$ }is a function $f:S\rightarrow\{0,1\}$ such
that for all $M \subset S$ with $\sum_{P\in M}P=I$,
we have $\sum_{P\in M}f(P)=1$.

\begin{defn}  [Projective KS set]
A set $S\subset\mathcal{Q}_n$ is a \emph{projective
Kochen-Specker set}  if for all \emph{marking functions} $f$ for $S$,
there exist $P,P'\in S$ for which $\Tr(PP')=0$
and $f(P)=f(P')=1$.
\end{defn}

It is easy to see that each set $M \subset \mathcal{Q}_n$ with $\sum_{P\in M}P=I$ is a projective measurement.
Also, note that weak KS sets are a special case of projective KS sets, if we identify a vector with the corresponding rank-$1$ projector.
Conversely, starting from any projective KS set one can construct (usually infinitely many) underlying weak KS sets (see Appendix \ref{apx:KS}).

Although in the rest of the paper we will only deal with projective KS sets, we can further generalize weak KS sets by considering subsets of  $\mathcal{S}_{+}^{n}$, the set of all $n\times n$ positive semidefinite matrices.

\begin{defn} [Generalized KS set]
\label{def:genKS}
A set $S\subset\mathcal{S}_{+}^{n}$ is a \emph{generalized Kochen-Specker set} if for all \emph{marking functions} $f$ for $S$
there exist $E,E'\in S$ for which $E+E'\leq I$
and $f(E)=f(E')=1$.
\end{defn}

Note that each set $M \subset \mathcal{S}_{+}^{n}$ with $\sum_{E\in M}E=I$ is a POVM (where, as usual, ``POVM'' stands for positive operator-valued measure). Projective KS sets are a special case of generalized KS sets, because when $S$ is a set of projectors the condition $E+E'\leq I$ is equivalent to $\Tr(EE')=0$.

KS-like sets consisting of positive semidefinite matrices have already been considered by Cabello \cite{Cabello}. Motivated by a a recent analogue of Gleason's theorem for positive semidefinite operators in two dimensions \cite{Busch,Fuchs}, Cabello exhibits what we here call a generalized KS set in $\mathcal{S}^2_+$. Hence, generalized KS sets exist even in two dimensions and have turned out to be useful for scenarios where regular KS sets do not apply (recall that there are no KS sets in $\C^2$).

\section{Zero-error channel capacity} \label{sec:zec}

In this section, we explain the concepts of classical and entanglement-assisted zero-error capacity
of a channel. In what follows, every graph is unweighted, undirected and without self-loops.

A \emph{classical channel} $\mathcal{N}$ with input set $X$ and output
set $Y$ is specified by a conditional probability distribution $\mathcal{N}(y|x)$, the probability to produce output $y$ upon input $x$.
Two inputs $x,x'\in X$ are \emph{confusable} if
there exists $y\in Y$ such that $\mathcal{N}(y|x)>0$ and $\mathcal{N}(y|x')>0.$
We then define the \emph{confusability graph of channel $\mathcal{N}$}, $G(\mathcal{N})$, as
the graph with vertex set $X$ and edge set $\{(x,x'):x,x'\mbox{ are distinct and confusable}\}$.

The \emph{one-shot zero-error capacity} of $\mathcal N$, $c_0(\mathcal{N})$,
is the size of a largest set of nonconfusable
inputs. This is just the independence number $\alpha(G(\mathcal{N}))$ of the confusability
graph.
In the \emph{entanglement-assisted}
setting, the sender (Alice) and receiver (Bob) share an entangled state $\rho$ and can perform local quantum measurements on their part of  $\rho$.

Let us now describe the general form of an entanglement-assisted protocol used by Alice to send one out of $q$ messages to Bob with a single use of the classical channel $\mathcal{N}$ (also see \cite{CLMW09}). For each message $m\in [q]$, Alice has a POVM $\mathcal{E}^m = \{E_{1}^{m},\dots,E_{|X|}^{m}\}$ with $|X|$ outputs.
To send message $m$, she measures her subsystem
using $\mathcal{E}^m$ and sends through the channel the
observed $x\in X$.
Bob receives some $y\in Y$ with $\mathcal{N}(y|x)>0$.
If the right condition holds (as we will explain below), Bob can recover $m$ with certainty
using a projective measurement on his subsystem.

We now state the necessary and sufficient condition for the success of the protocol.
If Alice gets outcome $x\in X$ upon measuring $\mathcal{E}^m$,
Bob's part of the entangled state collapses to
$\beta_{x}^{m}=\Tr_{A}((E_{x}^{m}\otimes I)\rho)$. Given channel's output $y$, Bob can recover $m$ if and only if
\begin{equation}
\forall m\neq m',\forall\mbox{ confusable }x,x'\ \Tr(\beta_{x}^{m}\beta_{x'}^{m'})=0.\label{eq:condition}
\end{equation}
Bob can recover the message with a projective measurement on the mutually orthogonal supports of
$$ \sum_{x \;:\; \mathcal{N}(y|x)>0} \beta_{x}^{m},$$
for all messages $m$.
In such a case, we say that assisted by the entangled state $\rho$ Alice can use the POVMs $\mathcal{E}^1,\dotsc,\mathcal{E}^q$ as her strategy for sending one out of $q$ messages with a single use of $\mathcal{N}$.

\begin{defn}
The \textit{entanglement-assisted one-shot zero-error channel capacity }of $\mathcal{N}$,  $c_0^*(\mathcal{N})$,  is the maximum integer $q$ such that
there exists a protocol for which condition \eqref{eq:condition} holds.
\end{defn}

Considering more than a single use of the channel, one can define the asymptotic zero-error channel capacity $\Theta(\mathcal{N})$ and the asymptotic entanglement-assisted zero-error channel capacity $\Theta^*(\mathcal{N})$ by
$ \displaystyle \Theta(\mathcal{N}) :=
        \lim_{k\rightarrow\infty} (c_0(\mathcal{N}^{\otimes k}))^{1/k}
 $ and $
\displaystyle
  \Theta^*(\mathcal{N}) :=
        \lim_{k\rightarrow\infty} (c_0^*(\mathcal{N}^{\otimes k}))^{1/k}.
$
Later in the paper we directly deal only with one-shot capacities.

Since $c_0^*(\mathcal{N})$ depends solely on the confusability graph $G(\mathcal{N})$ \cite{CLMW09}, we can talk about $c_0^*(G)$ for a graph $G$, meaning the entanglement-assisted one-shot zero-error capacity of a channel with confusability graph $G$. Similarly we can talk about quantities $c_0(G), \Theta(G),$
and $\Theta^*(G)$.

\subsection{Relationship with KS sets} \label{sec:zec2}

As shown in \cite{CLMW09}, every weak Kochen-Specker
set can be used to construct a graph for which $c_0(G)<c_0^*(G)$.
Following the line of argument of  \cite{CLMW09}, we now prove that also projective KS sets can be used for this purpose (see Theorem \ref{thm:KStoG}) as well as a weak converse of this statement (see Theorem \ref{thm:GtoKS}).

Let $\ket{\Psi} := \frac{1}{\sqrt n} \sum_{i\in [n]} \ket{i}\ket{i}$, where $\{\ket{i}\}_{i\in [n]}$ is the standard basis of $\C^n$. We start by proving the following technical lemma.
\begin{lem} \label{lem:ZEC-graph-partition}
Let $G=(V,E)$ be a graph whose vertex set $V$ can be partitioned into $k$ cliques $S_1, \dots, S_k$, not necessarily of the same size.
Assume that there is an assigment of a projector $P_v$ to each vertex $v$ such that:
\begin {enumerate}
 \item for all edges $(v, w)$, we have $ \Tr(P_v P_w) =0$,
 \item for all $i\in [k]$, we have $\sum_{v\in S_i} P_v = I$.
\end {enumerate}
Then, $c_0^*(G)\geq k$.
\end{lem}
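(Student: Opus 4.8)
The plan is to exhibit an explicit one-shot entanglement-assisted protocol that sends one of $k$ messages through any channel $\mathcal{N}$ with confusability graph $G$, thereby witnessing $c_0^*(G)\geq k$. First I would take the shared state to be the maximally entangled state $\rho=\ket{\Psi}\bra{\Psi}$ with $\ket{\Psi}=\frac{1}{\sqrt n}\sum_{i\in[n]}\ket{i}\ket{i}$, and for each message $m\in[k]$ define the POVM $\mathcal{E}^m=\{E_v^m\}_{v\in V}$ (with common outcome set $X=V$) by setting $E_v^m=P_v$ when $v\in S_m$ and $E_v^m=0$ otherwise. Assumption~2 immediately gives $\sum_{v\in V}E_v^m=\sum_{v\in S_m}P_v=I$, so each $\mathcal{E}^m$ is a genuine POVM; the padding by zero operators is what lets all $k$ measurements share the single outcome set required by the protocol description.

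The key computation is the form of Bob's collapsed states. Using the standard identity $\Tr_A((E\otimes I)\ket{\Psi}\bra{\Psi})=\frac{1}{n}E^T$ for the maximally entangled state, I obtain $\beta_v^m=\frac{1}{n}P_v^T$ when $v\in S_m$ and $\beta_v^m=0$ otherwise. Consequently, for any outcomes $x,x'$ and messages $m,m'$,
$$\Tr(\beta_x^m\beta_{x'}^{m'})=\frac{1}{n^2}\Tr(P_x^T P_{x'}^T)=\frac{1}{n^2}\Tr\big((P_{x'}P_x)^T\big)=\frac{1}{n^2}\Tr(P_x P_{x'}),$$
where the last steps use that the trace is invariant under transposition.

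It then remains to verify condition~\eqref{eq:condition}, namely that this quantity vanishes for all $m\neq m'$ and all confusable $x,x'$. I would split into two cases. If $x=x'$, then since $S_1,\dots,S_k$ partition $V$ the vertex $x$ belongs to exactly one clique, so at most one of $\beta_x^m,\beta_x^{m'}$ is nonzero for $m\neq m'$, and the trace is $0$ automatically. If $x\neq x'$ are confusable, then $(x,x')$ is an edge of $G$, so Assumption~1 gives $\Tr(P_xP_{x'})=0$ and hence $\Tr(\beta_x^m\beta_{x'}^{m'})=0$. Thus condition~\eqref{eq:condition} holds and the protocol succeeds in transmitting $k$ distinct messages.

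Since the argument is essentially a verification, there is no deep obstacle; the points requiring care are purely bookkeeping. One must track the transpose in $\beta_v^m=\frac{1}{n}P_v^T$ and observe that it is harmless precisely because the trace is transpose-invariant, and one must notice that the partition structure of the cliques is exactly what disposes of the $x=x'$ case for distinct messages (an input confusable with itself cannot carry two different messages because it lies in a single $S_m$). With these observations the conclusion $c_0^*(G)\geq k$ follows.
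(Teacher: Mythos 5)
Your proof is correct and follows essentially the same route as the paper: share the maximally entangled state, encode message $m$ by measuring with the projectors of clique $S_m$ (padded with zeros), compute $\beta_v^m=P_v^\top/n$, and verify condition~\eqref{eq:condition} from the orthogonality on edges together with the fact that the cliques partition $V$. Your version merely spells out the $x=x'$ case and the transpose-invariance of the trace, which the paper leaves implicit.
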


\begin{proof}
We design a strategy for Alice and Bob to send one out of $k$ messages over a channel with
confusability graph $G$.
Let Alice and Bob share the maximally entangled state $\ket{\Psi}$ defined above.
To send message $m\in[k]$, Alice measures her part of $\ket{\Psi}$
using the projective measurement given by the projectors assigned to $S_{m}$ and sends the outcome $v\in V$
through the channel.
For all $v, m$ we have that either $\beta_{v}^{m} =\Tr_{A}((P_v\otimes I)\ketbra{\Psi}{\Psi}) = P_v^\top /n$ if $v\in S_m$ or $\beta_{v}^{m}=0$ if $v\notin S_m$. Since the projectors assigned to adjacent (confusable) vertices are orthogonal, the Condition \eqref{eq:condition} holds. Hence, the above strategy is valid and $c_0^*(G)\geq k$.
\end{proof}

We relate every multiset of projectors $T$ to a graph.
The \emph{orthogonality graph} of $T$ is the
graph with vertex set $T$ and edge set $\{P,P': \Tr(PP')=0\}$. Let us denote by $\uplus$ the multiset union. We are now ready to prove the following.
\begin{thm}[Projective KS set $\Rightarrow$ separation]
Let $S$ be a projective KS set.
Let $S_{1},\dots,S_{k} \subset \mathcal{Q}_n$ be all the subsets of $S$ such that $\sum_{P\in S_{i}}P=I$. Then the orthogonality graph $G$ of the multiset $S_{1} \uplus \dots \uplus S_{k}$ satisfies $c_0(G)<c_0^*(G)$.
\label{thm:KStoG}
\end{thm}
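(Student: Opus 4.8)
The plan is to prove the two inequalities $c_0^*(G) \ge k$ and $c_0(G) = \alpha(G) \le k-1$ separately; together they give $c_0(G) < c_0^*(G)$. The observation underlying both halves is that each $S_i$ is not merely a subset of $S$ but a \emph{clique} of the orthogonality graph $G$: since the projectors in $S_i$ sum to $I$, a short argument shows they are pairwise orthogonal. Indeed, for distinct $P_a, P_b \in S_i$ one has $P_a = P_a\big(\sum_{c} P_c\big) P_a = P_a + \sum_{c \neq a} P_a P_c P_a$, so $\sum_{c\neq a} P_a P_c P_a = 0$; as each $P_a P_c P_a$ is positive semidefinite, every term vanishes, whence $P_aP_b = 0$ and $\Tr(P_a P_b)=0$. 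Thus the multiset $S_{1}\uplus\dots\uplus S_{k}$ decomposes, block by block, into $k$ cliques of $G$.

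For the lower bound I would simply invoke Lemma~\ref{lem:ZEC-graph-partition}. The vertex set of $G$ is partitioned into the $k$ cliques above, and to the vertex that is a copy of $P$ I assign the projector $P$ itself. Condition~1 of the lemma (edges map to orthogonal projectors) holds by the very definition of the orthogonality graph, and Condition~2 (each clique sums to $I$) holds because $\sum_{P \in S_i} P = I$. The lemma then yields $c_0^*(G) \ge k$.

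The upper bound $\alpha(G) \le k-1$ is the crux. Because each block $S_i$ is a clique, any independent set meets each block in at most one vertex, so trivially $\alpha(G) \le k$; the real work is to rule out equality, and this case analysis is what I expect to be the main obstacle. I would argue by contradiction: given an independent set $I$ of size exactly $k$, it selects exactly one vertex from each block, say a copy of $P^{(i)} \in S_i$, and I define $f \colon S \to \{0,1\}$ by setting $f(P)=1$ precisely when $P \in \{P^{(1)},\dots,P^{(k)}\}$. I must then check that $f$ is a genuine marking function whose marked set contains no orthogonal pair, contradicting the projective KS property. To see that $f$ marks exactly one element of each $S_i$: if some $P^{(j)} \neq P^{(i)}$ also lay in $S_i$, then $P^{(i)}$ and $P^{(j)}$ would be two distinct members of the clique $S_i$, hence orthogonal, yet both belong to the independent set $I$ and so are non-adjacent in $G$ -- a contradiction. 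Hence $S_i$ meets $\{P^{(1)},\dots,P^{(k)}\}$ only in $P^{(i)}$, and since the $S_i$ are by hypothesis \emph{all} the subsets of $S$ summing to $I$, $f$ is indeed a marking function. Finally, any two distinct marked projectors are copies of distinct vertices of $I$, hence non-adjacent, hence non-orthogonal; so no two marked projectors are orthogonal. This contradicts the defining property of the projective KS set $S$, forcing $\alpha(G) \le k-1$. Chaining the bounds gives $c_0(G) = \alpha(G) \le k-1 < k \le c_0^*(G)$, as desired.
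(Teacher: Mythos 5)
Your proposal is correct and follows essentially the same route as the paper: both halves rest on the decomposition of $G$ into the $k$ cliques $S_1,\dots,S_k$, with Lemma~\ref{lem:ZEC-graph-partition} giving $c_0^*(G)\geq k$ and a size-$k$ independent set being converted into a marking function that marks no orthogonal pair, contradicting the projective KS property. The only difference is that you spell out explicitly why each $S_i$ is a clique and handle the multiset bookkeeping a bit more carefully than the paper does; both are harmless elaborations of the same argument.
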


\begin{proof}
Observe that every $S_i$ is a projective measurement,
so the vertices of $G$ can be partitioned in $k$ cliques $S_{1},\dots,S_{k}$.
Let $T$ be a maximal independent set in $G$. Suppose towards a contradiction that $\abs{T}=k$, i.e., $T$ is a multiset of projectors containing exactly one element per clique. From maximality of $T$, we have that if $P\in T$ is part of $\ell$ measurements, then $T$ contains $\ell$ copies of $P$. Define a marking function for $S$ as:
$$
f(P) = 1 \iff P \in T.
$$
It is a marking function for $S$ because  $S_{1},\dots,S_{k} \subset \mathcal{Q}_n$ are all the projective measurements in $S$ and $f$ selects exactly one element from each $S_i$. Moreover, $f$ does not mark any pair of orthogonal elements because $T$ is an independent set and $G$ is an orthogonality graph.
The existence of $f$ contradicts the assumption that $S$ is a projective KS set. Therefore, $c_0(G)= \alpha(G) < k$.
To see that $c_0^*(G)\geq k$, partition the vertices of $G$ into $k$ cliques $S_{1},\dots,S_{k}$ and use Lemma \ref{lem:ZEC-graph-partition}.
\end{proof}

The following theorem provides a weak converse of Theorem \ref{thm:KStoG}. It relates a particular class of classical channels to projective KS sets, namely the channels for which Alice can send one out of $c_0^*(\mathcal{N})$ messages using projective measurements on $\ket{\Psi}$ as her strategy.
To the best of our knowledge, it is an open question whether projective measurments and maximally entangled state are always sufficient to achieve $c_0^*(\mathcal{N})$.

\begin{thm}[Separation $\Rightarrow$ projective KS set]
Let $G=(V,E)$ be the confusability graph of a classical channel $\mathcal{N}$ and $k\in \mathbb{Z}$ be such that $c_0(G)<k$. If Alice can use the projective measurements $\Pi^m=\set{P_{1}^{m},\dots,P_{|V|}^{m}}$ for all ${m\in[k]}$ together with $\ket{\Psi}$ as her strategy for sending one out of $k$ messages over $\mathcal{N}$ then
\[
  S:=\set{P_{v}^{m}: v\in [\abs{V}], m\in[k]}
\]
is a projective KS set.
\label{thm:GtoKS}
\end{thm}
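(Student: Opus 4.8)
The plan is to prove the contrapositive in the following sense: I will show that \emph{if} $S$ fails to be a projective KS set, \emph{then} $G$ contains an independent set of size $k$, so that $c_0(G)=\alpha(G)\geq k$, contradicting the hypothesis $c_0(G)<k$. Concretely, I assume there is a marking function $f$ for $S$ that marks no orthogonal pair, i.e.\ there are no $P,P'\in S$ with $\Tr(PP')=0$ and $f(P)=f(P')=1$, and from $f$ I extract the desired independent set.

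The first step is to translate the validity of Alice's strategy into a purely operator-theoretic orthogonality relation. Since the shared state is the maximally entangled $\ket{\Psi}$, the post-measurement states are $\beta_v^m=\Tr_A((P_v^m\otimes I)\ketbra{\Psi}{\Psi})=(P_v^m)^\top/n$, exactly as in the proof of Lemma~\ref{lem:ZEC-graph-partition}. Using $\Tr(A^\top B^\top)=\Tr(AB)$, condition \eqref{eq:condition} becomes
$$\Tr(P_v^m P_{v'}^{m'})=0 \quad \mbox{for all } m\neq m' \mbox{ and all confusable } v,v'.$$
Here I would stress that, by definition, every input is confusable with itself, so this orthogonality holds in particular in the diagonal case $v=v'$.

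The second step uses the marking-function property on each measurement. For every $m\in[k]$ the set $\Pi^m$ sums to $I$, so viewed as a subset of $S$ it satisfies $\sum_{P\in\Pi^m}f(P)=1$; hence $f$ marks exactly one projector of $\Pi^m$. I would dispose of the degenerate case in which this projector is $0$ by noting that $\Tr(0\cdot 0)=0$, so $f(0)=1$ would already exhibit an orthogonal marked pair, contrary to the choice of $f$. Thus the marked projector is nonzero, and since the nonzero projectors of a projective measurement are mutually orthogonal hence distinct, it corresponds to a well-defined vertex $g(m)$ with $f(P_{g(m)}^m)=1$. This yields a map $g:[k]\to V$.

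The crux is to show that $\{g(1),\dots,g(k)\}$ is an independent set of size $k$, and I would establish injectivity and independence in one stroke: if $g(m)$ and $g(m')$ were confusable for some $m\neq m'$ — a condition that simultaneously covers $g(m)=g(m')$ (via self-confusability) and the case where $g(m),g(m')$ are distinct but adjacent — then the orthogonality relation above gives $\Tr(P_{g(m)}^m P_{g(m')}^{m'})=0$ while $f$ marks both projectors with $1$; if these two projectors coincided, orthogonality would force the projector to be $0$, again excluded. In every case this contradicts the choice of $f$, so the $g(m)$ are distinct and pairwise non-adjacent, giving $\alpha(G)\geq k$. The main obstacle is not conceptual — the argument is essentially the contrapositive of Theorem~\ref{thm:KStoG}, with marking functions playing the role of colorings and orthogonality the role of edges — but rather the careful bookkeeping around self-confusability and around possible coincidences or zeros among the $P_v^m$, which is where the degenerate cases must be ruled out explicitly.
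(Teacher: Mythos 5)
Your proposal is correct and follows essentially the same route as the paper's proof: assume $S$ is not a projective KS set, translate condition \eqref{eq:condition} via $\beta_v^m=(P_v^m)^\top/n$ into trace-orthogonality of the projectors (including the diagonal case from self-confusability), and use the non-orthogonality-marking function $f$ to extract one vertex per measurement forming an independent set of size $k$, contradicting $c_0(G)<k$. Your extra bookkeeping about zero projectors and coincidences is a harmless refinement of the same argument.
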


\begin{proof}
We prove that if $S$ is\emph{ not} a projective KS set, then we can construct an independent set of size $k$ and thus $c_0(G)\geq k$. If $S$ is not a projective KS set, then there exists a marking function $f:S\rightarrow\{0,1\}$ such that for any $P,P'\in S$ for which $\Tr(PP')=0$, $f(P)=0$ or $f(P')=0$. Consider the set
\[
  J := \{v\in V : f(P_{v}^{m})=1, \text{ for some } m\in[k]\}.
\]
We now show that $J$ is an independent set of size $k$. Note that in a protocol where Alice measures her part of $\ket{\Psi}$ using $\Pi^m$, Bob's state is
\[
  \beta_{v}^{m} =\Tr_{A}((P_v^m\otimes I)\ketbra{\Psi}{\Psi}) 
= (P_v^m)^\top /n
\]
if Alice gets outcome $v$. Since Alice can use this strategy for communicating messages with zero-error, Condition~\ref{eq:condition} is respected and we get that for all distinct $m, m'\in[k]$:
\begin{enumerate}
\item $\Tr(\beta_v^m\beta_v^{m'})=0$ and hence $\Tr(P_v^{m} P_v^{m'})=0$;
\item for all $(u,v)\in E$ we have $\Tr(\beta_u^m\beta_v^{m'})=0$ and hence $\Tr(P_u^{m} P_v^{m'})=0$. 
\end{enumerate}

Recall that $f$ cannot assign value one to two orthogonal projectors. Hence, from (1.)~and the fact that $f$ selects one projector from each of $k$ measurements $\Pi^m$, we obtain that $\abs{J}=k$. Similarly, from (2.)~and the fact that $\Tr(P^m_v P^m_u)=0$ for all distinct $u,v\in V$, we get that $J$ is an independent set.
\end{proof}

We remark that Theorems \ref{thm:KStoG} and \ref{thm:GtoKS} also follow from Theorem \ref{thm:ks_pt} in the Appendix, by using a reduction from one-shot zero-error channel capacity to pseudo-telepathy games given in \cite{CLMW09}. However, the direct approach taken in the proofs above more clearly shows the relationship between orthogonality graphs of projective KS sets and graphs having $c_0(G)<c_0^*(G)$.

\section{Chromatic number} \label{sec:qcn}

In this section we illustrate the concept of quantum chromatic number of a graph \cite{qchrom} and state some of its properties. Then, we prove that projective Kochen-Specker sets completely characterize the class of graphs for which this parameter is strictly larger than the chromatic number.

A \emph{proper $c$-coloring} of a graph is an assignment of $c$
colors to the vertices of the graph such that every two adjacent vertices
have different colors. The \emph{chromatic number} of a graph $G$,
denoted by $\chi(G)$, is the minimum number of colors $c$ such that
there exists a proper $c$-coloring of $G$.

Define the \emph{coloring game} for $G=(V,E)$ as follows. Two players,
Alice and Bob, claim that they have a proper $c$-coloring for $G$.
A referee wants to test this claim with a one-round game, so he forbids
communication between the players and separately asks Alice the color
$\alpha$ for the vertex $v$ and Bob the color $\beta$ for the vertex
$w$. The players win the game if the following holds:
\begin{enumerate}
\item If $v=w$, then $\alpha=\beta$
\item If $(v,w)\in E$, then $\alpha\neq\beta$
\item $\alpha,\beta\in\{1,\dots,c\}$.
\end{enumerate}
A classical strategy consists of two deterministic functions $c_{A}:V\rightarrow[c]$
for Alice and \mbox{$c_{B}:V\rightarrow[c]$} for Bob. To win with probability one, we must have $c_{A}=c_{B}$
(to satisfy the first condition) and $c_{A}$ must be a valid
$c$-coloring of the graph (to satisfy the second and third conditions).
Therefore, classical players cannot win the game with probability one using fewer than $\chi(G)$ colors.

A quantum strategy for the coloring game uses an entangled state $\ket{\psi}$ and two families of POVMs: for all $v\in V$,
Alice has a POVM $\mathcal{E}^v = \{E^v_{\alpha}\}_{\alpha=1,\dots,c}$ and Bob has a POVM $\mathcal{F}^v =\{F^v_{\beta}\}_{\beta=1,\dots,c}$.
Upon input $v$, Alice measures her part of $\ket{\psi}$ with $\mathcal{E}^m$ and responds with the obtained outcome $\alpha$. Bob acts similarly and outputs
$\beta$. The requirements for the game translate into the following
\emph{consistency conditions}. The players win the coloring game
with certainty if and only if
\begin{align}
&\forall v\in V,\forall\alpha\neq\beta,\ \bra{\psi}E^v_{\alpha}\otimes F^v_{\beta}\ket{\psi}=0\label{eq:consistency1} \\
&\forall(v,w)\in E,\forall\alpha,\ \bra{\psi}E^v_{\alpha}\otimes F^w_{\alpha}\ket{\psi}=0.\label{eq:consistency2}
\end{align}
 In this case, we call the strategy a \emph{winning strategy} or a
\emph{quantum $c$-coloring of $G$}. Note that we do not restrict the
dimension of the entangled state or the rank of the measurement operators,
we only care about the \textit{number} of measurement operators needed
to win the game with certainty.
\begin{defn}
For all graphs $G$, the \textit{quantum chromatic number} $\chi_{q}(G)$
is the minimum number $c$ such that there exists a quantum $c$-coloring
of $G$.
\end{defn}

Let $\overline{F}$ be the entry-wise complex conjugate of $F$.
It is known \cite{qchrom,SS12} that there always exists a quantum $c$-colouring in a special form:
\begin{defn}
A quantum $c$-coloring of $G$ is in \emph{normal form} if:
\begin{enumerate}
\item All POVMs are projective measurements with $c$ projectors of rank
$r$.
\item The shared state is \mbox{$\ket{\psi} = \frac{1}{\sqrt{rc}} \sum_{i\in [rc]} \ket{i}\ket{i} $}.
\item Alice's projectors are related to Bob's as follows: for all $v,\alpha$,
$E^v_{\alpha}=\overline{F^v_{\alpha}}$.
\item The consistency conditions (\ref{eq:consistency1}) and (\ref{eq:consistency2})
can be expressed as the single condition:
\begin{equation}
\forall(v,w)\in E,\forall\alpha\in[c],\ \Tr(E^v_{\alpha}E^w_{\alpha})=0.\label{eq:consistency_new}
\end{equation}
\end{enumerate}
\end{defn}

\begin{thm}[\cite{qchrom, SS12}]
If $G$ has a quantum $c$-coloring, then
it has a quantum $c$-coloring in normal form.
\label{wlog_strategies}
\end{thm}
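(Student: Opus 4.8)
The plan is to start from an arbitrary quantum $c$-coloring $(\ket{\psi}, \{E^v_\alpha\}, \{F^v_\alpha\})$ and massage it into normal form in four stages, checking at each stage that the consistency conditions \eqref{eq:consistency1} and \eqref{eq:consistency2} are preserved. First I would normalize the shared state: restricting Alice's and Bob's spaces to the supports of their reduced density operators, I may assume $\ket{\psi} = \sum_i \sqrt{\lambda_i}\ket{i}\ket{i}$ has full Schmidt rank $d$ with all $\lambda_i>0$. Writing $\rho := \rho_A = \diag(\lambda_i)$ and $D = \rho^{1/2}$, a direct computation gives the bilinear identity $\bra{\psi}(M\otimes N)\ket{\psi} = \Tr(D M D N^\top)$, where the transpose is taken in the Schmidt basis. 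Introducing the positive semidefinite operators $\tilde E^v_\alpha := D E^v_\alpha D$ and $\tilde F^v_\alpha := D (F^v_\alpha)^\top D$, both families sum to $\rho$, and the two conditions become $\Tr(\tilde E^v_\alpha \tilde F^v_\beta)=0$ for $\alpha\neq\beta$ and $\Tr(\tilde E^v_\alpha \tilde F^w_\alpha)=0$ for $(v,w)\in E$.

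The technical heart — and the step I expect to be the main obstacle — is a lemma extracting rigid structure from condition \eqref{eq:consistency1}. Since the operators are positive semidefinite, $\Tr(\tilde E^v_\alpha \tilde F^v_\beta)=0$ upgrades to the operator identity $\tilde E^v_\alpha \tilde F^v_\beta = \tilde F^v_\beta \tilde E^v_\alpha = 0$ for $\alpha\neq\beta$. Multiplying $\sum_\beta \tilde F^v_\beta = \rho$ on the left by $\tilde E^v_\alpha$, and symmetrically, yields $\tilde E^v_\alpha \rho = \rho \tilde F^v_\alpha = \tilde E^v_\alpha \tilde F^v_\alpha$, from which I would derive that $\tilde F^v_\alpha$ commutes with $\rho^2$, hence with $\rho$, and therefore that $\tilde E^v_\alpha = \tilde F^v_\alpha$. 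This gives condition (3), namely $E^v_\alpha = \overline{F^v_\alpha}$. Because the $\tilde E^v_\alpha$ now form a family of mutually range-orthogonal positive operators that commute with $\rho$ and sum to $\rho$, working inside each eigenspace of $\rho$ shows that $E^v_\alpha = \rho^{-1/2}\tilde E^v_\alpha \rho^{-1/2}$ is an orthogonal projector, that $\{E^v_\alpha\}_\alpha$ is a projective measurement, and that each $E^v_\alpha$ commutes with $\rho$. This establishes the projective part of condition (1).

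Finally I would replace $\ket{\psi}$ by the maximally entangled state. Using the same transpose identity with the maximally entangled state and Bob's operators set to $\overline{E^v_\alpha}$, condition \eqref{eq:consistency2} — which I have rewritten as $\Tr(\rho E^v_\alpha \rho E^w_\alpha)=0$ — collapses, using that the projectors commute with $\rho>0$, to the single edge condition \eqref{eq:consistency_new}, i.e.\ $\Tr(E^v_\alpha E^w_\alpha)=0$; this gives conditions (2) and (4). To secure the last requirement of (1), that all projectors share a common rank $r$, I would apply a cyclic-shift tensoring trick: on $\C^d\otimes\C^c$ set $\widehat E^v_\alpha := \sum_{\gamma\in[c]} E^v_{\alpha+\gamma}\otimes\ketbra{\gamma}{\gamma}$ with indices taken modulo $c$. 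Each $\widehat E^v_\alpha$ then has rank exactly $d$, the $\{\widehat E^v_\alpha\}_\alpha$ remain projective measurements, and summing the edge condition over the shifts preserves \eqref{eq:consistency_new}; the ambient dimension $dc$ matches the required value $rc$ with $r=d$, completing the normal form.

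In summary, the whole difficulty is concentrated in the commutation/projectivity lemma of the second paragraph; once it forces $E^v_\alpha = \overline{F^v_\alpha}$, projectivity, and commutation with $\rho$, the passage to the maximally entangled state and the rank-equalization step are routine verifications that the edge condition survives.
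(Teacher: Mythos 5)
The paper does not actually prove Theorem~\ref{wlog_strategies}: it imports the statement from \cite{qchrom,SS12}, so there is no internal proof to compare against. Your reconstruction follows what is essentially the argument of those references (restrict to the Schmidt support, use positivity to upgrade vanishing traces to vanishing operator products, deduce projectivity and conjugacy of Alice's and Bob's operators, pass to the maximally entangled state, and equalize ranks by the cyclic tensor trick), and the overall structure is sound.

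One bookkeeping error in your second paragraph needs fixing. With $D=\rho^{1/2}$, the identity $\bra{\psi}(M\otimes N)\ket{\psi}=\Tr(DMDN^{\top})$ is correct, but this quantity equals $\Tr\bigl(\tilde E^v_\alpha\,(F^v_\beta)^{\top}\bigr)=\Tr\bigl(E^v_\alpha\,\tilde F^v_\beta\bigr)$, \emph{not} $\Tr\bigl(\tilde E^v_\alpha\tilde F^v_\beta\bigr)=\Tr\bigl(D^2E^v_\alpha D^2(F^v_\beta)^{\top}\bigr)$. The two vanishing conditions are genuinely different: for rank-one operators $E^v_\alpha=\ketbra{e}{e}$, $(F^v_\beta)^{\top}=\ketbra{f}{f}$ the first asks $\bra{f}D\ket{e}=0$ while the second asks $\bra{f}D^{2}\ket{e}=0$, so you cannot substitute one for the other when $\rho$ is not maximally mixed. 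Fortunately the argument goes through --- in fact more directly --- with the correct translation: condition \eqref{eq:consistency1} gives $E^v_\alpha\tilde F^v_\beta=0$ for $\alpha\neq\beta$; summing over $\beta$ yields $E^v_\alpha\rho=E^v_\alpha\tilde F^v_\alpha$ and summing over $\alpha$ yields $\tilde F^v_\alpha=E^v_\alpha\tilde F^v_\alpha$, whence $\tilde F^v_\alpha=E^v_\alpha\rho$. Taking adjoints shows that $E^v_\alpha$ commutes with $\rho$, that $(F^v_\alpha)^{\top}=E^v_\alpha$ (i.e.\ $F^v_\alpha=\overline{E^v_\alpha}$), and, since $E^v_\alpha E^v_\beta\rho=0$ for $\alpha\neq\beta$ together with $\sum_\alpha E^v_\alpha=I$, that the $E^v_\alpha$ are mutually orthogonal projectors. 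The remaining steps --- the collapse of \eqref{eq:consistency2} to \eqref{eq:consistency_new} using commutation with $\rho>0$, the passage to $\ket{\Psi}$, and the rank-equalizing cyclic shift whose projectors each have rank $\sum_{\alpha}\operatorname{rank}(E^v_\alpha)=d$ in ambient dimension $dc$ --- are correct as you state them.
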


\subsection{Relationship with KS sets} \label{sec:qcn2}

We know from \cite{SS12} that \emph{weak} KS sets characterize all the graphs
with a separation between the chromatic number and the rank-1 quantum
chromatic number (\emph{i.e.,} the minimum $c$ such that there exists a quantum
$c$-coloring of the graph using only rank-$1$ projectors). We now
prove that \emph{projective} KS sets characterize all the graphs with a
separation between the chromatic number and the quantum chromatic
number.

Theorem \ref{wlog_strategies} allows us to identify a quantum $c$-coloring with Alice's multiset of projectors, denoted as $\{P^v_{\alpha}\}_{v\in V, \alpha \in [c]}$.
\begin{thm}
\label{thm:chiq_chi} For all graphs $G$, we have that $\chi(G)> \chi_{q}(G) =: c$
if and only if for all quantum $c$-colorings in normal form, $S = \bigcup_{v\in V,\alpha\in[c]}\{P^v_{\alpha}\}$ is a projective KS set. \end{thm}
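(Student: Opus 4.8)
The plan is to prove both implications by contraposition. Throughout I use that $\chi_{q}(G)\le\chi(G)$, so that the negation of $\chi(G)>c$ is precisely $\chi(G)=c$. I also use the immediate reformulation of the projective KS property coming from its definition: $S$ is \emph{not} a projective KS set if and only if it admits a \emph{good} marking function, meaning a marking function $f$ on $S$ that never assigns the value $1$ to two orthogonal projectors. In this language, the forward implication asks me to rule out good marking functions for every normal-form $c$-coloring when $\chi(G)>c$, while the backward implication asks me to produce, when $\chi(G)=c$, one normal-form $c$-coloring that does admit a good marking function. Note the universal statement is not vacuous: since $c=\chi_{q}(G)$, a quantum $c$-coloring exists and by Theorem \ref{wlog_strategies} may be taken in normal form.

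For the forward direction I argue the contrapositive: assume some normal-form $c$-coloring $\{P^v_{\alpha}\}$ gives a set $S$ that is not a projective KS set, and extract a proper classical $c$-coloring, so that $\chi(G)\le c$. Let $f$ be a good marking function for $S$. For each vertex $v$ the projectors $\{P^v_{\alpha}\}_{\alpha\in[c]}$ are pairwise orthogonal and sum to $I$, hence form a $c$-element subset of $S$ summing to the identity; the marking condition forces $\sum_{\alpha}f(P^v_{\alpha})=1$, so exactly one color is marked at $v$, which I take as $g(v)$. To check properness, suppose $g(v)=g(w)=\alpha$ for an edge $(v,w)$; then $f(P^v_{\alpha})=f(P^w_{\alpha})=1$, whereas the consistency condition \eqref{eq:consistency_new} gives $\Tr(P^v_{\alpha}P^w_{\alpha})=0$. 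Since distinct orthogonal nonzero projectors cannot coincide, this exhibits two distinct orthogonal elements of $S$ both marked $1$, contradicting goodness of $f$. Hence $g$ is a proper $c$-coloring and $\chi(G)\le c$.

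For the backward direction, again by contraposition, assume $\chi(G)\le c$; combined with $c=\chi_{q}(G)\le\chi(G)$ this yields $\chi(G)=c$, so a proper classical $c$-coloring $g\colon V\to\mathbb{Z}_{c}$ exists. I then exhibit one explicit normal-form $c$-coloring whose $S$ is not a projective KS set, namely the shift coloring in $\C^{c}$ with $r=1$: set $P^v_{\alpha}=\ketbra{\alpha+g(v)}{\alpha+g(v)}$ with indices taken mod $c$, let Bob use the conjugate projectors, and take the state $\ket{\psi}=\frac{1}{\sqrt c}\sum_{i}\ket{i}\ket{i}$. For every $v$ the family $\{P^v_{\alpha}\}_{\alpha}$ is just the standard basis measurement, so all clauses of the normal-form definition hold, and for an edge $(v,w)$ one computes $\Tr(P^v_{\alpha}P^w_{\alpha})=\delta_{g(v),g(w)}=0$, verifying \eqref{eq:consistency_new}; thus this is a genuine quantum $c$-coloring, optimal since $c=\chi_{q}(G)$. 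Here $S=\{\ketbra{j}{j}:j\in\mathbb{Z}_{c}\}$, and the only subset of $S$ summing to $I$ is $S$ itself, so any marking function marks exactly one projector and is therefore automatically good. Hence $S$ is not a projective KS set, completing the contrapositive.

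The difficulty here is bookkeeping rather than a deep obstruction. In the forward direction the delicate points are that $\{P^v_{\alpha}\}_{\alpha}$ genuinely has $c$ \emph{distinct} elements (so the marking condition applies cleanly), guaranteed by orthogonality within a measurement, and that the two projectors witnessing a coloring conflict are distinct elements of the \emph{set} $S$ (again from orthogonality, since $\Tr(P^2)=\rank P>0$). In the backward direction the only thing requiring care is confirming that the shift coloring satisfies every clause of the normal-form definition and uses exactly $c$ outcomes, so that it qualifies as a quantum $\chi_{q}(G)$-coloring in the precise sense demanded by the theorem statement.
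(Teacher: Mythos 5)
Your proof is correct and follows essentially the same route as the paper: the forward direction extracts a proper classical $c$-coloring from a good marking function exactly as in the paper's argument, and the backward direction uses the same cyclic-shift construction $P^v_{\alpha}=\ketbra{\alpha+g(v)}{\alpha+g(v)}$ to produce a normal-form coloring whose projector set is a single measurement and hence not a projective KS set. The only difference is that you spell out a few bookkeeping points (distinctness of orthogonal nonzero projectors, non-vacuity via Theorem \ref{wlog_strategies}) that the paper leaves implicit.
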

\begin{proof}
$\Rightarrow )$ Let $\chi(G)> \chi_{q}(G) =: c$ and let $S$
be the union of  Alice's projectors in a quantum $c$-coloring in normal form.
We now show that if $S$ \emph{is not} a projective KS set, then we
can properly $c$-color the graph, contradicting the assumption that
$\chi(G)>c$. If $S$ is not a projective KS set then there exists
a marking function $f:S \rightarrow\{0,1\}$ such that for
all orthogonal $P,P'\in S$ we have $f(P)=0$ or $f(P')=0$. We can
use the function $f$ to $c$-color the graph as follows:
\[
\mbox{color}(v)=\alpha\mbox{ if }f(P^v_{\alpha})=1.
\]
 This is a proper $c$-coloring for the following two reasons. First,
the quantum coloring associates each vertex to a projective measurement,
and since $f$ is a marking function, exactly one projector per measurement is mapped to $1$.
Second, the property of $f$
and the consistency condition (\ref{eq:consistency_new}) ensure that
we never color adjacent vertices with the same color.

$\Leftarrow )$ Let $\chi_{q}(G)=c$ and assume that for all quantum $c$-colorings
 in normal form, the union of Alice's projectors is a projective KS set.
Now suppose, towards a contradiction,
that it is possible to classically $c$-color the graph. Then for each $v\in V$ with classical color $\alpha$,
define the projective measurement \mbox{$\{ P^v_i= \ketbra{i+\alpha}{i+\alpha}\}_{i\in\{0,\dots,c-1\}}$}
(where the addition is modulo $c$). It is easy to see that this is
a valid quantum $c$-coloring, and the union
of its vectors consists of one projective measurement only. Thus it is
not a projective KS set, because you can define a function that maps
$\ketbra{1}{1}$ to $1$ and all other projectors to $0$. This contradicts
the assumption that the union of Alice's projectors is a projective KS set.
\end{proof}

As in Section \ref{sec:zec2}, we remark that Theorem \ref{thm:chiq_chi} also follows from Theorem \ref{thm:ks_pt} in the appendix. However, we again prefer the direct approach to underline the structural relationship between graphs with $\chi(G)>\chi_q(G)$ and orthogonality graphs of projective KS sets.

\section{Relationship between chromatic number and zero-error channel capacity} \label{sec:QCN_ZEC}

Here we use the relationships described in the previous sections to show that every graph with a separation between quantum and classical chromatic number can be used to construct a channel with separation between entanglement-assisted and classical one-shot zero-error capacity. Moreover, using this fact we find a new class of channels with large separation between the one-shot zero-error capacities.

For any two graphs $G$ and $H$ we can define their Cartesian product $G\square H$ as follows. The vertex set of $G\square H$ is  $V(G) \times V(H)$. Two vertices $(v,i)$ and $(w,j)$ are adjacent if either $v=w $ and $(i,j)\in E(H)$ or $(v,w)\in E(G)$ and $i=j$.

The main result of this section needs the following lemmas.

\begin{lem}\label{lem:Vizing}
{\cite{Vizing}} For all graphs $G,H$, the
independence number of their Cartesian product satisfies
\[
\alpha(G\square H)\leq\min\{\alpha(G)\cdot|V(H)|,\alpha(H)\cdot|V(G)|\}.
\]
\end{lem}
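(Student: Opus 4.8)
The plan is to prove the bound $\alpha(G\square H)\leq \alpha(G)\cdot|V(H)|$ directly, since the second term $\alpha(H)\cdot|V(G)|$ then follows by the symmetry $G\square H \cong H\square G$ (the Cartesian product is commutative up to isomorphism, and the independence number is an isomorphism invariant). So it suffices to establish one of the two inequalities and invoke symmetry for the other.

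For the main inequality, let $I\subseteq V(G)\times V(H)$ be a maximum independent set in $G\square H$, so $|I|=\alpha(G\square H)$. The key structural observation I would use is the definition of the Cartesian product: two vertices $(v,i)$ and $(v,j)$ sharing the same first coordinate are adjacent exactly when $(i,j)\in E(H)$. This means that for each fixed $w\in V(H)$, the ``column'' $\{(v,w):v\in V(G)\}$ induces a copy of $G$, and for each fixed $v\in V(G)$ the ``row'' $\{(v,i):i\in V(H)\}$ induces a copy of $H$. The cleanest route is to slice $I$ by its second coordinate: for each $w\in V(H)$, set $I_w=\{v\in V(G):(v,w)\in I\}$. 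I claim each $I_w$ is an independent set in $G$. Indeed, if $v,v'\in I_w$ with $(v,v')\in E(G)$, then the vertices $(v,w)$ and $(v',w)$ have $v\neq v'$ adjacent in $G$ and equal second coordinate, hence are adjacent in $G\square H$, contradicting that $I$ is independent. Therefore $|I_w|\leq\alpha(G)$ for every $w$.

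Summing the slice bound over all $w\in V(H)$ gives the result: since the sets $I_w$ partition $I$ according to the second coordinate (each vertex of $I$ lies in exactly one slice), we have
\[
\alpha(G\square H)=|I|=\sum_{w\in V(H)}|I_w|\leq\sum_{w\in V(H)}\alpha(G)=\alpha(G)\cdot|V(H)|.
\]
Applying the identical argument to $H\square G$ (or equivalently slicing by the first coordinate instead) yields $\alpha(G\square H)\leq\alpha(H)\cdot|V(G)|$, and taking the minimum of the two completes the proof.

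I do not anticipate a genuine obstacle here, as the statement is a standard and elementary fact about graph products. The only point requiring minor care is the verification that each slice is independent in $G$, which hinges on correctly unpacking the Cartesian product adjacency rule, in particular the clause stating that $(v,i)$ and $(w,j)$ are adjacent when $(v,w)\in E(G)$ and $i=j$. One should be careful not to conflate the Cartesian product with the tensor or strong product, for which the analogous bound would fail or take a different form; the argument above uses specifically that same-second-coordinate vertices inherit adjacency directly from $G$.
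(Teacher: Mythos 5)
Your proof is correct: the paper states this lemma only as a citation to Vizing and gives no proof of its own, and your slicing argument (partitioning a maximum independent set by its $H$-coordinate, noting each slice is independent in the corresponding copy of $G$, and summing) together with the commutativity of $\square$ is the standard, complete, elementary derivation of the stated bound. Nothing is missing; the one point you flag as needing care — that same-second-coordinate pairs inherit adjacency from $G$ under the Cartesian (as opposed to tensor or strong) product — is handled correctly.
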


\begin{lem}
\label{lem:alpha_lessthan_n}Let $G$ be a graph on $n$ vertices
with $\chi(G)>k$. Then $c_0(G\square K_{k})<n$.\end{lem}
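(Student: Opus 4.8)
The plan is to prove the equivalent statement $\alpha(G\square K_k) < n$, since $c_0(G\square K_k)$ is by definition the independence number of the confusability graph $G\square K_k$. I would first record the structural observation that, for each fixed vertex $v\in V(G)$, the ``column'' $\{(v,i):i\in[k]\}$ is a clique in $G\square K_k$: by the definition of the Cartesian product, $(v,i)$ and $(v,j)$ are adjacent whenever $(i,j)\in E(K_k)$, which holds for every pair of distinct $i,j$ because $K_k$ is complete. Consequently any independent set $T\subseteq V(G\square K_k)$ can contain at most one vertex from each of the $n$ columns, so $\alpha(G\square K_k)\le n$. (This upper bound also follows directly from Lemma~\ref{lem:Vizing}, using $\alpha(K_k)=1$.)

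The key step is to rule out equality. I would argue by contradiction: suppose $\alpha(G\square K_k)=n$ and let $T$ be an independent set of size $n$. By the column bound, $T$ must meet every column in exactly one vertex, so there is a well-defined function $\phi:V(G)\to[k]$ with $(v,\phi(v))\in T$ for every $v\in V(G)$. I claim $\phi$ is a proper $k$-coloring of $G$. Indeed, take any edge $(v,w)\in E(G)$ and suppose $\phi(v)=\phi(w)$; then the two vertices $(v,\phi(v))$ and $(w,\phi(w))$ of $T$ satisfy $(v,w)\in E(G)$ and have equal second coordinate, hence are adjacent in $G\square K_k$, contradicting the independence of $T$. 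Therefore $\phi(v)\ne\phi(w)$ for every edge, i.e., $\phi$ is a proper coloring of $G$ using at most $k$ colors.

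The existence of such a $\phi$ contradicts the hypothesis $\chi(G)>k$, which asserts precisely that $G$ admits no proper $k$-coloring. Hence the assumption $\alpha(G\square K_k)=n$ is untenable, and we conclude $\alpha(G\square K_k)<n$, that is, $c_0(G\square K_k)<n$. I expect no genuinely hard step here: the only point requiring care is the clean bijective correspondence between independent sets of $G\square K_k$ that saturate the per-column bound and functions $V(G)\to[k]$, together with the verification that independence translates exactly into properness of the induced coloring. Once this dictionary is in place, the chromatic-number hypothesis delivers the contradiction immediately.
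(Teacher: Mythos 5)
Your proof is correct and follows essentially the same route as the paper's: partition $V(G\square K_k)$ into the $n$ column cliques, observe that an independent set of size $n$ would select one color per vertex, and derive a proper $k$-coloring contradicting $\chi(G)>k$. No issues.
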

\begin{proof}
The vertex set of $G\square K_{k}$ can be partitioned into $n$ disjoint cliques of
size $k$.
Towards a contradiction, suppose
$c_0(G\square K_{k})\geq n$. Then an independent set of size $n$
must contain exactly one vertex from each clique in the partition.
We can get a $k$-coloring for $G$, as follows: if $(v,i)$
belongs to the independent set, color $v\in E(G)$ with the $i$-th
color. This is a proper coloring because, by definition of the Cartesian
product of graphs, for all $(u,v)\in E(G)$ we have $((u,i),(v,i))\in E(G\square K_{k})$,
and hence $u$ and $v$ will not both get color $i$.
This contradicts the assumption that $\chi(G)>k$.\end{proof}

\begin{lem}
Let $G$ be a graph on $n$ vertices and $\chi_q(G)\leq k$. Then $c^*_0(G\square K_k)=\Theta^*(G\square K_k)=n$.
\label{lem:QuantumCap}
\end{lem}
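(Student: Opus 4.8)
The plan is to pin both capacities to $n$ by proving $c_0^*(G\square K_k)\geq n$ and $\Theta^*(G\square K_k)\leq n$ separately, and then combine them using the elementary fact that $c_0^*(\N)\leq\Theta^*(\N)$ always holds (the sequence $c_0^*(\N^{\otimes m})$ is super-multiplicative, since $m$ independent copies of an optimal one-shot strategy send $c_0^*(\N)^m$ messages, so by Fekete's lemma the limit defining $\Theta^*$ dominates its first term). Together these give $n\leq c_0^*(G\square K_k)\leq\Theta^*(G\square K_k)\leq n$, which forces equality throughout. The genuinely constructive part is the lower bound; the upper bound is an invocation of known bounds.

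For the lower bound I would feed a quantum coloring directly into Lemma \ref{lem:ZEC-graph-partition}. Since $\chi_q(G)\leq k$, Theorem \ref{wlog_strategies} supplies a quantum $k$-coloring of $G$ in normal form, i.e. Alice's projectors $\{P^v_\alpha\}_{v\in V,\alpha\in[k]}$ such that each $\{P^v_\alpha\}_{\alpha\in[k]}$ is a projective measurement (so $\sum_{\alpha}P^v_\alpha=I$ and $P^v_\alpha P^v_\beta=0$ for $\alpha\neq\beta$) and the consistency condition \eqref{eq:consistency_new} holds: $\Tr(P^v_\alpha P^w_\alpha)=0$ for every $(v,w)\in E(G)$ and every $\alpha$. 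I then assign the projector $P^v_\alpha$ to the vertex $(v,\alpha)$ of $G\square K_k$ and partition the vertex set into the $n$ cliques $S_v=\{(v,\alpha):\alpha\in[k]\}$; these are cliques because $K_k$ is complete, and they partition $V(G\square K_k)$. Condition (2) of Lemma \ref{lem:ZEC-graph-partition} is immediate from $\sum_\alpha P^v_\alpha=I$. For condition (1), every edge of $G\square K_k$ is of exactly one of two types: $((v,\alpha),(v,\beta))$ with $\alpha\neq\beta$, for which $\Tr(P^v_\alpha P^v_\beta)=0$ because the measurement at $v$ is projective, or $((v,\alpha),(w,\alpha))$ with $(v,w)\in E(G)$, for which $\Tr(P^v_\alpha P^w_\alpha)=0$ by \eqref{eq:consistency_new}. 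Lemma \ref{lem:ZEC-graph-partition} (with $n$ cliques) then yields $c_0^*(G\square K_k)\geq n$.

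For the upper bound I would use the known fact that the entanglement-assisted zero-error capacity is bounded by the Lovász theta number, $\Theta^*(H)\leq\vartheta(H)$ \cite{ZEC-theta, Winter}, together with the Lovász sandwich inequality $\vartheta(H)\leq\bar\chi(H)$, where $\bar\chi(H)=\chi(\overline H)$ is the minimum number of cliques covering $V(H)$. The same $n$ cliques $S_v$ constructed above already cover $V(G\square K_k)$, so $\bar\chi(G\square K_k)\leq n$ and hence $\Theta^*(G\square K_k)\leq\vartheta(G\square K_k)\leq n$. This closes the squeeze. The main point requiring care is the upper-bound bookkeeping: fixing the theta convention so that $\vartheta$ upper-bounds the independence number of the confusability graph, and correctly chaining $c_0^*\leq\Theta^*\leq\vartheta\leq\bar\chi\leq n$ so that all three quantities are simultaneously pinned to $n$; the lower-bound construction itself is routine once the normal-form coloring is recognized as exactly the projector assignment demanded by Lemma \ref{lem:ZEC-graph-partition}.
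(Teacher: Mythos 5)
Your proposal is correct and follows essentially the same route as the paper: the lower bound $c_0^*(G\square K_k)\geq n$ via feeding a normal-form quantum $k$-coloring into Lemma \ref{lem:ZEC-graph-partition} is identical, and the upper bound is the same squeeze $c_0^*\leq\Theta^*\leq\vartheta(G\square K_k)\leq n$. The only (immaterial) difference is how $\vartheta(G\square K_k)\leq n$ is obtained: you invoke the sandwich inequality $\vartheta\leq\bar\chi$ with the clique cover $\{S_v\}_{v\in V}$, whereas the paper uses monotonicity of $\vartheta$ under edge addition together with multiplicativity, via $\vartheta(G\square K_k)\leq\vartheta(\overline{K}_n\boxtimes K_k)=\vartheta(\overline{K}_n)\vartheta(K_k)=n$; both are standard one-line properties of the theta function.
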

\begin{proof}
Let $G'=G\square K_k$.
We first show that $c_0^*(G')\geq n$. Note that the vertex set of $G'$ can be partitioned into $n$ disjoint cliques of size $k$.
Now consider an quantum $k$-coloring of $G$ in normal form, $\set{P^v_i}_{v\in V, i\in [k]}$. Assign each projector $P^v_i$ to the vertex $(v,i)$ of $G'$. By the properties of a quantum coloring in normal form, this assignment satisfies the requirements of Lemma \ref{lem:ZEC-graph-partition}. Therefore, $c_0^*(G')\geq n$.

Now note that $\overline{K}_n\boxtimes K_k$ is a subgraph of $G\square K_k$, where $\boxtimes$ denotes the strong product of graphs and $\overline{K}_n$ is the complement of $K_k$. By properties of Lov\'{a}sz $\vartheta$ \cite{Lovasz}, we have that
$$\vartheta(G\square K_k)\leq\vartheta(\overline{K}_n\square K_k)=
\vartheta(\overline{K}_n) \vartheta(K_k)=n.$$
Since Lov\'{a}sz $\vartheta$ is an upper bound for the asymptotic zero error entanglement assisted capacity \cite{Winter,ZEC-theta}, we have $\Theta^*(G\square K_k)\leq n$. Putting everything together gives
$$
  n\leq c_0^*(G\square K_k) \leq \Theta^*(G\square K_k)\leq n
$$
and the lemma follows.
\end{proof}

Combining the results from the above lemmas we obtain the following theorem:
\begin{thm} \label{thm:QCN_ZEC}
Let $G$ be a graph on $n$ vertices with $\chi(G)>\chi_{q}(G)=:k$. Then for $G'=G\square K_{k}$:
\begin{enumerate}
 \item $c_0(G') < c_0^*(G')=n$
 \item $c_0(G')\leq\alpha(G)\cdot k$.
\end{enumerate}

\end{thm}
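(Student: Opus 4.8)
The plan is to assemble the theorem directly from the three preceding lemmas, since each clause of the conclusion matches one of them once the hypothesis $\chi(G) > \chi_q(G) = k$ is unpacked into its two useful consequences, namely $\chi(G) > k$ and $\chi_q(G) \leq k$.

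For statement~1, I would argue the two sides separately and then chain them. Since $\chi(G) > \chi_q(G) = k$ gives $\chi(G) > k$, Lemma~\ref{lem:alpha_lessthan_n} applies to $G' = G\square K_k$ and yields $c_0(G') < n$. Independently, since $\chi_q(G) = k \leq k$, Lemma~\ref{lem:QuantumCap} applies and yields $c_0^*(G') = n$ (indeed even $\Theta^*(G') = n$). Combining the two gives $c_0(G') < n = c_0^*(G')$, which is exactly the first claim.

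For statement~2, I would recall that $c_0(G') = \alpha(G')$, the independence number of the confusability graph, and then specialize Vizing's bound (Lemma~\ref{lem:Vizing}) to $H = K_k$. Since $\alpha(G\square K_k) \leq \min\{\alpha(G)\cdot|V(K_k)|,\ \alpha(K_k)\cdot|V(G)|\}$ and $|V(K_k)| = k$, the first term in the minimum is $\alpha(G)\cdot k$; retaining that term gives $c_0(G') = \alpha(G\square K_k) \leq \alpha(G)\cdot k$, as required. (One also sees $\alpha(K_k) = 1$ and $|V(G)| = n$, so the bound even sharpens to $\min\{\alpha(G)\cdot k,\ n\}$, consistent with statement~1.)

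Because the result is a corollary-style combination of the lemmas, I do not expect a genuine obstacle; the only step needing care is bookkeeping of the hypotheses. Specifically, I must check that the single assumption $\chi(G) > \chi_q(G) = k$ simultaneously supplies $\chi(G) > k$ for Lemma~\ref{lem:alpha_lessthan_n} and $\chi_q(G) \leq k$ for Lemma~\ref{lem:QuantumCap}, and that the parameters $|V(K_k)| = k$ and $\alpha(K_k) = 1$ are substituted correctly when invoking Vizing's inequality.
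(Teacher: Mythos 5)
Your proposal is correct and follows exactly the paper's own argument: Lemma \ref{lem:alpha_lessthan_n} gives $c_0(G')<n$, Lemma \ref{lem:QuantumCap} gives $c_0^*(G')=n$, and Lemma \ref{lem:Vizing} with $H=K_k$ gives the second bound. The paper's proof is just a terser version of the same combination of the three lemmas.
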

\begin{proof}
Since $\chi(G)>k$, $\chi_q(G)=k$ and $G$ has $n$ vertices, we have from Lemma \ref{lem:alpha_lessthan_n} and Lemma \ref{lem:QuantumCap} that $c_0(G') < c_0^*(G')=n$, as desired. The second bound follows directly from Lemma \ref{lem:Vizing}.
\end{proof}

Note that the second upper bound of Theorem \ref{thm:QCN_ZEC} is very interesting
in the case $\alpha(G)\cdot\chi_{q}(G)\ll n$. This happens
only when there is a separation between quantum and classical chromatic
number, because for the chromatic number we have $\alpha(G)\cdot\chi(G)\geq n$.
Therefore, as we show in the next section, some graphs with a large separation between quantum and classical chromatic number induce graphs with large separation between entanglement-assisted and classical one-shot zero-error capacity.

\subsection{Hadamard graphs}
In this section we apply the observations made above. Specifically, we isolate a new family of graphs for which the one-shot zero-error entanglement-assisted capacity is exponentially larger than its classical counterpart. Each member of this family is a Cartesian product of a Hadamard graph with a complete graph. Chromatic number and quantum chromatic number are known to be different for Hadamard graphs with a sufficiently large number of vertices \cite{KP05, Avis, Godsil}. Graphs with such a separation are important in the attempt to quantify the information-theoretic gain permitted by the use of shared entanglement. Results in this respect have been reported in \cite{BBG11, LMMOR}. Additionally, despite graphs with a separation seems to be rare, examples and potential characterizations are valuable. These could provide a clearer picture on the complexity of other related parameters in algebraic graph theory, like, for example, vector colorings, rank bounds, etc.

The Hadamard graph $\Omega_{n}$ is  a graph with vertex set $\{\pm1\}^{n}$
and edge set $\{(u,v): \inpc{u}{v}=0\}$.
Hadamard graphs are also known in literature as orthogonality graphs and Deutsch-Jozsa graphs.

\begin{thm}
For all $n>8$ that are divisible by 4, there exists $\eps>0$ such that
\[
\frac{c_0^*(\Omega_{n}\square K_{n})}{c_0(\Omega_{n}\square K_{n})}\geq \frac{1}{n} \left(\frac{2}{2-\epsilon}\right)^n.
%\frac{2^{n}}{(2-\epsilon)^{n}\cdot n}.
\]
\end{thm}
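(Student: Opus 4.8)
The plan is to apply Theorem~\ref{thm:QCN_ZEC} to the Hadamard graph $G=\Omega_n$, whose vertex set $\{\pm1\}^n$ has size $2^n$ and whose edges join $\pm1$-vectors that are orthogonal, i.e.\ that differ in exactly $n/2$ coordinates. Two quantities must then be controlled: the numerator $c_0^*(\Omega_n\square K_n)$, which I will argue equals $2^n$, and the denominator $c_0(\Omega_n\square K_n)=\alpha(\Omega_n\square K_n)$, which I will bound above by $n\cdot\alpha(\Omega_n)$. Dividing and substituting an exponentially small upper bound on $\alpha(\Omega_n)$ will produce the claimed ratio $\tfrac1n\big(\tfrac{2}{2-\eps}\big)^n$, with $\eps$ coming from that bound.

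First I would fix $k=\chi_q(\Omega_n)$. For $n$ divisible by $4$ it is known that $\chi_q(\Omega_n)=n$, witnessed by a normal-form quantum $n$-coloring built from the $\pm1$-twisted Fourier/Hadamard bases (vertex $x$ is assigned the orthonormal basis obtained by applying $\diag(x)$ to a fixed basis, so that adjacent vertices receive bases satisfying the orthogonality condition~\eqref{eq:consistency_new}); see \cite{qchrom,Avis,Godsil}. With $k=n$ the graph $\Omega_n\square K_n$ is precisely the graph $G\square K_k$ of Theorem~\ref{thm:QCN_ZEC}, and Lemma~\ref{lem:QuantumCap} (via Lemma~\ref{lem:ZEC-graph-partition}) gives $c_0^*(\Omega_n\square K_n)=|V(\Omega_n)|=2^n$, the numerator.

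Next I would bound the denominator. Since $\alpha(K_n)=1$, Lemma~\ref{lem:Vizing} gives $\alpha(\Omega_n\square K_n)\le\min\{\,n\,\alpha(\Omega_n),\,2^n\,\}=n\,\alpha(\Omega_n)$, hence $c_0(\Omega_n\square K_n)\le n\,\alpha(\Omega_n)$. The decisive external input is the Frankl--Rödl theorem \cite{KP05,Avis}: an independent set of $\Omega_n$ is a subset of $\{\pm1\}^n$ with no two vectors at Hamming distance exactly $n/2$, and for $n$ divisible by $4$ (so that $n/2$ is even) and $n$ large this forces $\alpha(\Omega_n)\le(2-\eps)^n$ for some $\eps>0$; the hypothesis $n>8$ places us in the regime where the bound is nontrivial. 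This also verifies the separation hypothesis of Theorem~\ref{thm:QCN_ZEC}, since $\chi(\Omega_n)\ge 2^n/\alpha(\Omega_n)\ge\big(\tfrac{2}{2-\eps}\big)^n> n=\chi_q(\Omega_n)$. Combining the two estimates,
\[
\frac{c_0^*(\Omega_n\square K_n)}{c_0(\Omega_n\square K_n)}\ \ge\ \frac{2^n}{n\,\alpha(\Omega_n)}\ \ge\ \frac{2^n}{n\,(2-\eps)^n}\ =\ \frac1n\Big(\frac{2}{2-\eps}\Big)^n,
\]
as desired.

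The main obstacle is the denominator estimate: the reductions to $c_0^*$ and $c_0$ are direct applications of the lemmas, but the inequality $\alpha(\Omega_n)\le(2-\eps)^n$ is the deep ingredient, resting entirely on the Frankl--Rödl theorem and on its applicability at the forbidden distance $n/2$ when $n$ is divisible by $4$. A secondary point requiring care is ensuring the normal-form quantum coloring attaining $\chi_q(\Omega_n)=n$ is available for every admissible $n$; this, together with the parity requirement on $n/2$ and the size threshold needed by Frankl--Rödl, is exactly where the conditions ``$n$ divisible by $4$'' and ``$n>8$'' are used.
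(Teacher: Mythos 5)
Your proposal is correct and follows essentially the same route as the paper: bound the numerator below by $2^n$ via $\chi_q(\Omega_n)\le n$ and Lemma~\ref{lem:QuantumCap}, and bound the denominator above by $n\,\alpha(\Omega_n)\le n(2-\eps)^n$ via Lemma~\ref{lem:Vizing} and the Frankl--R\"odl theorem. The only cosmetic difference is that you assert the equality $\chi_q(\Omega_n)=n$ and route through Theorem~\ref{thm:QCN_ZEC}, whereas the paper needs (and cites) only the upper bound $\chi_q(\Omega_n)\le n$ and applies the two lemmas directly.
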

\begin{proof}
It is shown in \cite{Avis} that $\chi_q(\Omega_n)\leq n$ for all $n\in\mathbb{Z}_+$.
Since $|V(\Omega_n)|=2^n$, using Lemma \ref{lem:QuantumCap} we conclude that $c_0^*(\Omega_n\square K_n)\geq 2^n$.

On the other hand from Theorem 1.11 in \cite{FR87}, it follows that for all $n$ divisible by 4, there exists $\epsilon>0$ such that $\alpha(\Omega_n)\leq (2-\epsilon)^n$. Hence, by Lemma~\ref{lem:Vizing}, we have that $c_0(\Omega_n\square K_n)\leq (2-\epsilon)^n \cdot n$.  By putting the two observations together we obtain the desired statement.
\end{proof}

To conclude, we give an example that also for small $n$ we can find a large ratio $\frac{c_0^*(\Omega_{n}\square K_{n})}{c_0(\Omega_{n}\square K_{n})} $. The following properties are proven in \cite{KP05,Avis}:
\begin{enumerate}
\item $\alpha(\Omega_{16})=2304$
\item $\chi(\Omega_{16})\geq29$
\item $\chi_{q}(\Omega_{16})=16$.
\end{enumerate}
Take a channel $\N$ with confusability graph $\Omega_{16}\square K_{16}$.
It follows from Theorem \ref{thm:QCN_ZEC} that $c_0^*(\N)=2^{16}$
while $c_0(\Omega_{16}\square K_{16})\leq\alpha(\Omega_{16})\cdot16=36864$.

\section{Conclusions and open questions}

We formally generalized the concept of Kochen-Specker sets and showed their applications in various nonlocality settings. In particular, we showed that projective KS sets lead to classical channels for which a single use aided by entanglement can transfer more information than a single use without entanglement assistance. We have also shown that projective KS sets completely characterize the graphs for which the quantum chromatic number is strictly smaller than the chromatic number.

Furthermore, we used projective KS sets to relate quantum chromatic number to entanglement-assisted one-shot zero-error capacity. For all channels obtained with our construction the Lov\'{a}sz  theta function is equal to the entanglement-assisted zero-error capacity. Hence, although our construction contributes to shed light on the link between the Lov\'{a}sz theta function and entanglement-assisted capacities, it cannot directly be used to resolve whether or not the entanglement-assisted zero-error capacity equals the Lov\'{a}sz theta function \cite{Winter}.

As an example application of the above discussed construction we exhibited a new class of graphs with an exponential separation between entanglement-assisted and classical one-shot zero-error capacities. In the appendix we showed that projective KS sets are in one-to-one correspondence with a class of pseudo-telepathy games that quantum players can win using projective measurements on maximally entangled state.

A number of questions remain open.
\begin{enumerate}
\item Can generalized KS sets that are not projective KS sets be used to construct pseudo-telepathy games? Furthermore, can a wider class of pseudo-telepathy games be characterized using generalized KS sets than projective ones?
\item Can we use a graph $G$ with $c_0(G)<c^*_0(G)$ to construct a $G'$ with $\chi_q(G')<\chi(G')$? Such a construction would be complementary to Theorem~\ref{thm:QCN_ZEC}.
\item Can generalized KS sets that are not projective KS sets also be used to exhibit separations between entanglement-assisted and Shannon capacity of a channel?
\item Determining the computational complexity of $\chi_q(G)$ as a function of the number of vertices in $G$ is now a long standing open question. Can we use the characterization given in Section~\ref{sec:qcn2} to answer it?
\item In Theorem \ref{thm:ks_pt} in the appendix, we show that projective KS sets correspond precisely to a class of pseudo-telepathy games  that quantum players can win using projective measurements on maximally entangled state. It follows from Theorem \ref{wlog_strategies} that graph coloring games on graphs where $\chi(G) > \chi_q(G)$ are a subclass of such games. Can the whole class be interpreted as pseudo-telepathy games based on some graph parameter?
\end{enumerate}

\begin{paragraph} {Acknowledgements} The authors thank Jop Bri\"{e}t, Will Matthews, Fernando de Melo, David Roberson, Antonis Varvitsiotis, and Ronald de Wolf  for helpful discussions. We also thank the referees of AQIS'12 for useful comments that improved the readability of the paper.
\end{paragraph}

\bibliographystyle{alphaurl}
%\bibliography{ZEC}
\newcommand{\etalchar}[1]{$^{#1}$}

\appendix

\begin{section} {Weak KS sets from projective KS sets} \label{apx:KS}
 We show here that starting from any projective KS set one can construct (usually infinitely many) underlying weak KS sets.
\begin{lem}
Let $S=\set{P_1,\dotsc,P_k}\subset \mathcal{Q}_n$ be a projective KS set. Consider a set of unit vectors $S'\in\C^n$ defined as follows:
$  S' := \bigcup_{i\in [k]}\mathcal{B}_i, $
where $\mathcal{B}_i=\set{v^i_1,\dotsc, v^i_{\mathop{rk}(P_i)}}$ is some orthonormal basis of the support of $P_i$. Then $S'$ is a weak KS set.
\end{lem}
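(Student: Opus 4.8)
The plan is to argue by contraposition: I will show that if $S'$ fails to be a weak KS set, then $S$ fails to be a projective KS set. So suppose there is a marking function $g\colon S'\to\{0,1\}$ witnessing that $S'$ is not weak, i.e., $g$ assigns the value $1$ to no pair of orthogonal vectors of $S'$. From $g$ I will manufacture a marking function $f$ on the projectors in $S$ that likewise marks no orthogonal pair, contradicting the hypothesis that $S$ is a projective KS set.

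The structural observation driving the proof is that every projective measurement inside $S$ corresponds to an orthonormal basis of $\C^n$ sitting inside $S'$. Concretely, if $M=\set{P_{i_1},\dots,P_{i_m}}\subset S$ satisfies $\sum_{P\in M}P=I$, then the $P_{i_j}$ have pairwise orthogonal supports whose dimensions sum to $n$, so $b_M:=\mathcal{B}_{i_1}\cup\dots\cup\mathcal{B}_{i_m}$ is a genuine orthonormal basis of $\C^n$ (the bases $\mathcal{B}_{i_j}$ are disjoint, since a unit vector cannot lie in two orthogonal subspaces, giving exactly $\sum_j\rank(P_{i_j})=n$ distinct vectors). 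By construction $b_M\subset S'$, so, since $g$ is a marking function for $S'$, we get $\sum_{v\in b_M}g(v)=1$.

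I then define $f\colon S\to\{0,1\}$ by $f(P_i):=\sum_{v\in\mathcal{B}_i}g(v)$. The step I expect to need the most care is checking that $f$ is indeed $\{0,1\}$-valued: the vectors of $\mathcal{B}_i$ are pairwise orthogonal, so if $g$ marked two of them it would mark an orthogonal pair, which $g$ does not do; hence at most one vector of each $\mathcal{B}_i$ is marked and $f(P_i)\in\{0,1\}$. Granting this, $f$ is a marking function for $S$: for any measurement $M$ as above, $\sum_{P_i\in M}f(P_i)=\sum_{v\in b_M}g(v)=1$ by the previous paragraph.

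Finally I verify that $f$ marks no orthogonal pair, which yields the contradiction. If $f(P_i)=f(P_j)=1$ with $\Tr(P_iP_j)=0$, then the supports of $P_i$ and $P_j$ are orthogonal, and the unique marked vector $v\in\mathcal{B}_i$ and unique marked vector $w\in\mathcal{B}_j$ lie in these orthogonal supports, so $\inp{v}{w}=0$ while $g(v)=g(w)=1$, contradicting the defining property of $g$. Thus $f$ is a marking function for $S$ avoiding orthogonal pairs, contradicting that $S$ is a projective KS set. Hence no such $g$ exists and $S'$ is a weak KS set. The only genuinely delicate points are the translation between the two notions of marking function (measurements in $S$ versus full orthonormal bases in $S'$) and the well-definedness of $f$, both of which hinge on the orthogonality bookkeeping above.
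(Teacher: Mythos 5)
Your proposal is correct and follows essentially the same route as the paper's proof: the same definition $f(P_i)=\sum_{v\in\mathcal{B}_i}g(v)$, the same observation that measurements in $S$ correspond to orthonormal bases in $S'$, and the same contradiction with $S$ being a projective KS set. Your explicit check that $f$ is $\{0,1\}$-valued is a small point the paper leaves implicit, but it is not a different argument.
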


\begin{proof}
Towards a contradiction assume that $S'$ is not a weak KS set. Then there exists a marking function $f':S'\rightarrow\set{0,1}$ such that for any two orthogonal vectors $u,v\in S'$ either $f'(u)=0$ or $f'(v)=0$. From $f'$ let us construct another marking function $f$ for the set $S$. Define $f:S\rightarrow\set{0,1}$ by
$$   f(P_i) := \sum_{j=1}^{\mathop{rk}(P_i)} f'(v^i_j). $$
Let $P_{1},\dotsc,P_{t}\in S$ be such that $\sum_{j\in[t]} P_{j} = I$. Then the corresponding sets of vectors $\mathcal{B}_{j}$ are mutually disjoint.
Moreover, $\mathcal{B}:=\mathcal{B}_{1}\cup\dotsc\cup\mathcal{B}_{t}$ is a complete orthonormal basis. Since $f'$ is a marking function, we have
$$ 1 = \sum_{v\in\mathcal{B}}f'(v) = \sum_{j=1}^t f(P_j) $$
which shows that  $f$ indeed is a marking function.
Consider any $P_i,P_j\in S$ such that $\Tr(P_iP_j)=0$. Since for any two orthogonal vectors $u,v\in S'$ either $f'(u)=0$ or $f'(v)=0$, it follows that $f'$ evaluates to 1 for at most one of the vectors in $\mathcal{B}_i\cup \mathcal{B}_j$. So, from definition of $f$ it follows that either $f(P_i)=0$ or $f(P_j)=0$ and we have reached a contradiction to $S$ being a projective KS set.
\end{proof}
\end{section}

\section{Projective KS sets and pseudo-telepathy games} \label{apx:pt}
This appendix generalizes the results of \cite{KS-PT} concerning the relationship between \emph{weak} KS sets and a class of
pseudo-telepathy games. We show that there is a relationship between \emph{projective} KS sets and a (larger) class of
pseudo-telepathy games.

In what follows we give the canonical definition of nonlocal games. A \emph{nonlocal game} is an experimental setup between a referee and two players, Alice and Bob. (It can also be defined with more players, but we do not consider this case here.)
The game is not adversarial, but the players collaborate with each other. They are allowed to arrange a strategy beforehand, but they are not allowed to communicate during the game. The referee sends Alice an input $x\in X$ and sends Bob an input $y\in Y$,
according to a fixed and known probability distribution $\pi$ on $X \times Y$. Alice and Bob answer with $a\in A$ and $b\in B$ respectively, and the referee declares the outcome of the game according
to a verification function $V: A\times B\times X\times Y \rightarrow \{$win $=1$, lose $=0\}$.
So, the nonlocal game is completely specified by the sets $X,Y,A,B$, a distribution $\pi$, and a  verification function $V$.

A \emph{deterministic classical strategy} is a pair of functions $s_A : X\rightarrow A$ and $s_B: Y\rightarrow B$ for Alice and Bob, respectively.
A \emph{quantum strategy} consists of a shared bipartite entangled state $\ket{\psi}$ and POVMs
$\{P^x_a\}_{a\in A},$ for every $x\in X$ for Alice and $\{P^y_b\}_{b\in B},$ for every $y\in Y$ for Bob. On input $x$, Alice uses
the POVM $\{P^x_a\}_{a\in A}$ to measure her part of the entangled state and Bob does similarly on his input $y$. Alice (resp.\ Bob) answers with $a$ (resp.\ $b$)  corresponding to the obtained measurement outcome. Therefore, the probability to output $a,b$ given $x,y$ is $\Pr(a,b|x,y) = \bra{\psi} P^x_a \otimes P^y_b \ket{\psi} $.

Informally, a nonlocal game is called a \emph{pseudo-telepathy game} if  players sharing the entangled state win with certainty,
while classical players have nonzero probability to lose. More formally:
\begin{defn} [Pseudo-telepathy game]
A nonlocal game with input sets $X,Y$, output sets $A,B$, input distribution $\pi$ and verification function $V$ is called a \emph{pseudo-telepathy game} if:
\begin{enumerate}
 \item There exists a quantum strategy such that for any $(a,b,x,y)$ with $\pi (x,y)> 0$ it holds that $\bra{\psi} P^x_a \otimes P^y_b \ket{\psi} \neq 0$ implies $V(a,b,x,y)=1$.
 \item For all deterministic classical strategies $s_A, s_B$, there exists a tuple $(a,b,x,y)$ with $\pi (x,y)> 0$ such that
$V(a,b,x,y)=0$ but $s_A(x) = a$ and $s_B(y) = b$.
\end{enumerate}
\end{defn}

The following theorem relates projective KS sets and a special kind of pseudo-telepathy games. We will only consider KS sets for which each projector is part of some projective measurement from $S$. Note that any KS set contains a KS set that satisfies this property. Therefore, projectors not contained in any measurement from $S$ are inessential.
Let $\overline{P}$ be the entry-wise complex conjugate of $P$.
For short, we say that the set $S=\set{P^x_a}_{(x,a)} \cup \set{\overline{P^y_b}}_{(y,b)}$ together with $\ket{\psi}$ is a quantum strategy for a nonlocal game $\mathcal G$ if $\{{P^x_a}_{(x,a)}\}$, $\{{P^y_b}_{(y,b)}\}$ and $\ket{\psi}$ are a quantum strategy for $\mathcal G$. We call a strategy optimal if players never lose.
Let $\ket{\Psi} = \frac{1}{\sqrt n} \sum_{i\in [n]} \ket{i}\ket{i}$, where $\{\ket{i}\}_{i\in [n]}$ is the standard basis of $\C^n$.

\begin{thm} \label{thm:ks_pt}
Let $S \subset \mathcal{Q}_n$ be a set of projectors such that every $P\in S$ is contained in some measurement $\mathcal{M}\subset S$. Then $S$ is a projective Kochen-Specker set if and only if there exists a pseudo-telepathy game for which $S$ together with $\ket{\Psi}$ is an optimal quantum strategy.
\end{thm}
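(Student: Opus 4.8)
The statement is an equivalence, and both directions rest on the single computation $\bra{\Psi}(P^x_a \otimes P^y_b)\ket{\Psi} = \frac{1}{n}\Tr(P^x_a \overline{P^y_b})$, which shows that on the maximally entangled state a pair of outcomes occurs with nonzero probability exactly when the associated elements $P^x_a,\overline{P^y_b}\in S$ are non-orthogonal. Thus ``optimal quantum strategy on $\ket{\Psi}$'' translates into ``every losing output pair is an orthogonal pair in $S$'', and the question of whether classical players can also win with certainty becomes the question of whether $S$ admits a marking function that never marks two orthogonal projectors, i.e.\ whether $S$ \emph{fails} to be a projective KS set. The plan is to make this dictionary precise in both directions, using repeatedly that the projective measurements contained in $S$ are exactly the subsets summing to $I$, that each outcome of such a measurement is one projector of $S$, and that two distinct projectors of the same measurement are orthogonal.

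For the forward direction ($S$ a projective KS set $\Rightarrow$ a game exists) I would exhibit one concrete game. Let $X=Y$ index the projective measurements $\mathcal{M}_1,\dots,\mathcal{M}_K$ contained in $S$, with $\pi$ uniform on $X\times Y$; Alice on input $x$ measures $\mathcal{M}_x$ and Bob on input $y$ measures the conjugated measurement $\{\overline{P}:P\in\mathcal{M}_y\}$, so that $S=\{P^x_a\}\cup\{\overline{P^y_b}\}$ as required, and $V(a,b,x,y)=1$ iff the two queried elements of $S$ lie in $\mathcal{M}_x,\mathcal{M}_y$ respectively and are non-orthogonal. By the amplitude identity the described quantum strategy never loses, hence is optimal. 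For the classical bound I would argue that a winning deterministic strategy $s_A,s_B$ must coincide on the diagonal, since $\Tr(s_A(x)s_B(x))\neq 0$ forces $s_A(x)=s_B(x)$, yielding a single selection $s$; and it must be globally consistent, since for $P\in\mathcal{M}_x\cap\mathcal{M}_{x'}$ with $s(x)=P$ the input $(x',x)$ forces $s(x')=P$. Then $f(P):=1$ iff $P=s(x)$ for $x\ni P$ is a well-defined marking function, and the off-diagonal wins $\Tr(s(x)s(y))\neq 0$ say $f$ marks no orthogonal pair, contradicting that $S$ is a projective KS set; so no classical strategy wins and the game is pseudo-telepathic.

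For the backward direction (a game $\Rightarrow$ $S$ a projective KS set) the game is arbitrary, so I would use only optimality. Suppose $S$ is not a projective KS set and fix a marking function $f$ marking no orthogonal pair. Since $\{P^x_a\}_a$ and $\{\overline{P^y_b}\}_b$ are projective measurements contained in $S$ (projectors summing to $I$, the latter because $\sum_b\overline{P^y_b}=\overline{I}=I$), $f$ marks exactly one outcome of each; let $s_A(x),s_B(y)$ be these outcomes. Were this strategy to lose on a positive-probability input, optimality would force $\bra{\Psi}P^x_{s_A(x)}\otimes P^y_{s_B(y)}\ket{\Psi}=0$, i.e.\ $\Tr(P^x_{s_A(x)}\overline{P^y_{s_B(y)}})=0$, an orthogonal pair in $S$ both of whose members are marked by $f$, which is impossible. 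Hence the classical strategy never loses, contradicting that the game is pseudo-telepathic, and therefore $S$ is a projective KS set.

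The analytically trivial part is the amplitude identity; the delicate part is the forward direction's classical analysis, namely turning a winning strategy into a genuine marking function. The subtlety is that a projector may belong to several measurements, so the per-measurement selection $s$ is not a priori consistent; the off-diagonal input pairs are precisely what enforce this consistency, and getting the indexing and conjugation conventions right (so that ``non-orthogonal in $S$'' matches ``nonzero amplitude'') is where care is needed. I would also verify the hypothesis that every $P\in S$ lies in some measurement, which guarantees the input set is rich enough to constrain $f$ on all of $S$.
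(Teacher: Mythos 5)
Your proposal is correct and follows essentially the same route as the paper: the same game (uniform distribution over pairs of measurements contained in $S$, verification by non-orthogonality of the queried projectors via the identity $\bra{\Psi}P\otimes \overline{Q}\ket{\Psi}=\Tr(PQ)/n$), the same translation between deterministic classical strategies and marking functions in both directions. Your treatment of the consistency of the selection across overlapping measurements is somewhat more explicit than the paper's, but it is the same argument.
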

\begin{proof}
$\Rightarrow )$ Assume $S$ is a projective KS set. Let $\set{S_1,\dotsc, S_k}$ be the set of all projective measurements contained in $S$ (\emph{i.e.\ }each $S_i$ is a set of projectors that sum to identity). Consider a nonlocal game $\mathcal{G}$ where $X,Y = [k]$, $A,B =[\max_i\abs{S_i}]$, $\pi$ is the uniform distribution, and the verification function is defined by
\begin{align*}
  V(a,b,x,y) = 1 & \Leftrightarrow    \bra{\Psi} P^x_a \otimes \overline{P^y_b} \ket{\Psi} \neq 0 \\
  & \Leftrightarrow  \Tr(P^x_a P^y_b) \neq 0.
\end{align*}

By definition, this game has a optimal quantum strategy in which Alice and Bob measure their part of $\ket{\Psi}$ using projective measurements $S_x$ and $\overline{S_y}$ respectively upon receiving inputs $(x,y)$.

Towards a contradiction, suppose there is an optimal classical strategy $(s_A, s_B)$. Note that $s_A=s_B$ by definition of $V$. Let us now construct a marking function $f$ for $S$ from $s_A$:
\begin{align*}
\forall P\in S,\   f(P)=1 \Leftrightarrow\  & \exists a\in A, x\in X \mbox{ such that } \\
& P=P^x_a \mbox{ and } s_A(x)=a.
\end{align*}
 By definition of $V$, we see that Alice has to pick a unique element $P^x_a$ from each projective measurement $S_x\subset S$ in a consistent way. More precisely, whenever $P^x_a\in S_y$ for some $y$, the element picked from $S_y$ also has to be $P^x_a$.
Hence, $f$ indeed is a valid marking function for $S$. As $S$ is a projective KS set, there exist distinct $P^x_a,P^y_b\in S$ such that $\Tr(P^x_a P^y_b)=0$ and $f(P^x_a)=f(P^y_b)=1$. Then  $s_A(x)=a$ and $s_B(y)=s_A(y)=b$ but $V(a,b,x,y)=0$, since $\Tr(P^x_a P^y_b)=0$. Hence, classical players cannot have a optimal strategy and the desired statement follows.

$\Leftarrow )$
Assume $\mathcal{G}$ is a pseudo-telepathy game and $S=\set{P^x_a}_{(x,a)} \cup \set{\overline{P^y_b}}_{(y,b)}$ together with $\ket{\Psi}$ is a winning quantum strategy. Every marking function $f$ for $S$ can be mapped to a classical strategy in the following way:
$$s_A(x) = a \Leftrightarrow  f(P^a_x) = 1 \text{ and }
  s_B(y) = b \Leftrightarrow  f(\overline{P^b_y}) = 1.$$
Since $\mathcal{G}$ is a pseudo-telepathy game, for every $f$ there exists a tuple $(a,b,x,y)$ such that $s_A(x)=a$ and $s_B(y)=b$ (and therefore $f(P^x_a)=f(P^y_b)=1$), but $V(a,b,x,y)=0$. Since quantum players always win we have $\bra{\Psi}P^x_a \otimes \overline{P^y_b}\ket{\Psi}=0$ and this implies $\Tr(P^x_a P^y_a)=0$  by the properties of $\ket{\Psi}$.
Therefore, for any marking function $f$ for $S$ we can find orthogonal projectors $P^x_a,P^y_b\in S$ such that $f(P^x_a)=f(P^y_b)=1.$ Hence, $S$ is projective KS set.
\end{proof}

\end{document}